\newtheorem{lemma}{Lemma}
\newtheorem{proposition}{Proposition}
\newtheorem{corollary}{Corollary}
\newtheorem{fact}{Fact}
\newtheorem{remark}{Remark}
\newtheorem{assumption}{Assumption}
\def\begcen{\begin{center}}
\def\endcen{\end{center}}
\newcommand{\bfp}{\mbox{$p$}}
\newcommand{\bfy}{\mbox{$y$}}
\newcommand{\col}{ \mbox{col} }
\def\calp{{\bfp}}
\def\bfy{{\bf y}}
\def\bfp{{\bf p}}
\def\liminf{\lim_{t \to \infty}}
\def\L2{{\cal L}_2}
\def\L2e{{\cal L}_{2e}}
\def\rea{\mathbb{R}}
\def\x{{x}}
\def\u{{u}}
\def\beal#1{\begin{align}{#1}\end{align}}
\def\begmat#1{\begin{bmatrix}#1\end{bmatrix}}
\def\begali#1{\begin{align}{#1}\end{align}}
\def\begalis#1{\begin{align*}{#1}\end{align*}}
\def\begequarr{\begin{eqnarray}}
\def\endequarr{\end{eqnarray}}
\def\begequarrs{\begin{eqnarray*}}
\def\endequarrs{\end{eqnarray*}}
\def\begarr{\begin{array}}
\def\endarr{\end{array}}
\def\begequ{\begin{equation}}
\def\endequ{\end{equation}}
\def\lab{\label}
\def\begdes{\begin{description}}
\def\enddes{\end{description}}
\def\begenu{\begin{enumerate}}
\def\begite{\begin{itemize}}
\def\endite{\end{itemize}}
\def\endenu{\end{enumerate}}
\def\lef[{\left[\begin{array}}
\def\rig]{\end{array}\right]}
\def\begcen{\begin{center}}
\def\endcen{\end{center}}
\def\begrem{\begin{remark}\rm}
\def\endrem{\end{remark}}
\def\begassum{\begin{assumption}}
\def\endassum{\end{assumption}}
\def\begassums{\begin{assumption*}}
\def\endassums{\end{assumption*}}
\def\begassu{\begin{ass}}
\def\endassu{\end{ass}}
\def\beglem{\begin{lemma}}
\def\endlem{\end{lemma}}
\def\begcor{\begin{corollary}}
\def\endcor{\end{corollary}}
\def\begfac{\begin{fact}}
\def\endfac{\end{fact}}
\def\TAC{{\it IEEE Trans. Automat. Contr.}}
\def\AUT{{\it Automatica}}
\def\liminf{\lim_{t \to \infty}}
\def\L2e{{\cal L}_{2e}}
\def\rea{\mathbb{R}}
\def\intnum{\mathbb{Z}}
\def\col{\mbox{col}}
\def\TAC{{\it IEEE Trans. Automatic Control}}
\def\AUT{{\it Automatica}}
\def\ba{\begin{array}}
\def\ea{\end{array}}
\def\begsubequ{\begin{subequations}}
\def\endsubequ{\end{subequations}}
\def\bfthe{{\boldsymbol\theta}}
\def\bfe{{\bf e}}
\def\calw{{\cal W}}
\def\bfome{{\boldsymbol{\Omega}}}
\def\bfeps{{\boldsymbol \epsilon}}
\def\bfy{{\bf Y}}
\def\ya{y_{1,2}}
\def\yb{y_{3,4}}
\def\yc{y_{5,6}}
\def\za{z_{1,2}}
\def\zb{z_{3,4}}
\DeclareMathOperator{\atantwo}{atan2}
\title{An Almost Globally Stable Adaptive Phase-Locked Loop for Synchronization of a Grid-Connected Voltage Source Converter}
\author{Daniele Zonetti, Alexey Bobtsov, Romeo Ortega, Nikolay Nikolaev, Oriol Gomis-Bellmunt\thanks{This work was supported by the Ministry of
Science and Higher Education of Russian Federation, passport of goszadanie no. 2019-0898 and by FEDER/Ministerio de Ciencia, Innovaci\'{o}n y Universidades-Agencia Estatal de Investigaci\'{o}n, Project RTI2018-095429-B-I00. The work of Oriol Gomis-Bellmunt is supported by the ICREA Academia program.}
\thanks{D. Zonetti and O. Gomis-Bellmunt are with the Centre  d'Innovaci\'{o} Tecnol\`{o}gica en Convertidors Est\`{a}tics i Accionaments, Departament d'Enginyeria El\`{e}ctrica, Universitat Polit\`{e}cnica de Catalunya, Barcelona 08028, Spain. (e-mail: daniele.zonetti(oriol.gomis)@upc.edu). }
\thanks{A. Bobtsov is with the Department of Control Systems and Robotics, ITMO University, Kronverkskiy av. 49, Saint-Petersburg, 197101, Russia and with the Laboratory ``Control of Complex Systems'', Institute of Problems of Mechanical Engineering, V.O., Bolshoj pr., 61, St. Petersburg, 199178, Russia (e-mail: bobtsov@mail.ru).}
\thanks{R. Ortega is with the Departamento Acad\'{e}mico de Sistemas
Digitales, ITAM, Rio Hondo 1, Col. Progreso Tizapan, 01080 Ciudad de
M\'{e}xico, Mexico (email: romeo.ortega@itam.mx).}
\thanks{N. Nikolaev is with the Department of Control Systems and Robotics, ITMO University, Kronverkskiy av. 49, Saint-Petersburg, 197101,
Russia (e-mail: nikona@yandex.ru).}}
\begin{document}

\maketitle
\thispagestyle{empty}
\begin{abstract}
In this paper we are interested in the problem of {\em adaptive synchronization} of a  voltage source converter with a possibly weak grid with unknown angle and frequency, but knowledge of its parameters. To guarantee a suitable synchronization with the angle of the three-phase grid voltage we design an adaptive observer for such a signal requiring measurements only at the point of common coupling. Then we propose two alternative certainty-equivalent, adaptive phase-locked loops that ensure the angle estimation error goes to zero for almost all initial conditions. Although well-known, for the sake of completeness, we also present a PI controller with feedforward action that ensures the converter currents converge to an arbitrary desired value. Relevance of the theoretical results and their robustness to variation of the grid parameters are thoroughly discussed and validated in the challenging scenario of a converter connected to a grid with low short-circuit-ratio.
\end{abstract}

\section{Introduction}
\lab{sec1}
%
We have witnessed in the last years a widespread penetration of renewable energy sources into the existing power grid. The integration of these sources is enabled by voltage source converters (VSCs), for which many questions and challenges arise about their operation and control~\cite{BOS}. Suitable operation of a grid-connected VSC is usually enforced by hierarchical controllers that are operated at different time-scales. These controllers structurally rely on a suitably defined rotating frame, whose reference angle implicitly determines the mode of operation of the VSC. Whenever the reference angle is selected to synchronize to, i.e. \textit{follow}, the grid angle, the VSC is said to operate in \textit{grid-following} mode; whenever it is designed, i.e. \textit{formed}, via a suitably defined frequency control loop, the VSC is said to operate in \textit{grid-forming} mode~\cite{MILetal2018}. Therefore, for the operation in grid-following mode the knowledge of the grid angle is essential. Since this information is not available to the system operator, an appropriate algorithm that estimates this signal must be designed. 

{A conventional solution to this problem consists in the design of phase-locked loop (PLL) algorithms~\cite{CHU,TEObook}.  A PLL is a basic, well-known, nonlinear feedback control system  that allows to generate an output signal with an angle that is locked to the angle of a given, input reference signal~\cite{ABR}. This mechanism is typically employed in power systems to lock the angle of the overall VSC control scheme to a suitable reference angle. For more information about PLL architecures from a control theoretical perspective, the interested reader is referred to~\cite{ABR}. Investigation of the stability properties of PLLs, in a variety of perturbed scenarios, has considerably attracted the interest of the power systems and control community. It is often argued that PLL algorithms can be effectively operated at a time-scale much slower than the time-scale at which the VSC controllers operate and physical dynamics evolve, this time-scale separation argument is introduced mainly to simplify the analysis with no further validation for it. Fundamental stability and performance properties of the conventional synchronous reference frame (SRF) and ATAN-PLL, under this widely debated assumption, have been already studied in~\cite{RAN} and more recently in \cite{ZONetal}.  However, many questions arise about the ability of the VSCs to synchronize under especially degraded conditions resulting, for example, from the connection of the VSC to grids characterized by a low short-circuit-ratio \cite{PAPetal}, where the  time-scales of the grid, the VSC and the PLL become specious. In particular, it has been observed that in such scenarios, where the grid is referred as \textit{weak}, instability may be triggered by inappropriate tuning of the PLL gains~\cite{ZHOU} and therefore a time-expensive tuning procedure to tune the latter must be adopted to avoid loss of synchronism and to guarantee acceptable performances~\cite{EGEetal}. As alternative approaches to preserve synchronization, a variety of \textit{ad hoc} outer-loops have been proposed in the literature and their stability properties have been analyzed and illustrated through extensive simulations---see~\cite{WANGetal} and references therein for an excellent overview. However, the stability analysis is carried out based on small-signal approximation of the system's dynamics, therefore providing limited information about the ability of the VSC to maintain synchronization in presence of large perturbations. On the other hand, large-signal stability analysis of PLL-based VSCs has been recently addressed in several papers, see for example~\cite{MANetal,ZHAOetal}. Unfortunately, the analysis builds upon the disputable argument of substantial similarities between PLL and conventional synchronous generator dynamics, often leading to conservative stability certificates whose applicability is highly questionable.

In this paper we provide an {\em almost\footnote{The qualifier ``almost" stands for the fact that convergence is ensured for all initial conditions starting outisde a set that has zero Lebesgue measure and is nowhere dense---so, for all practical purposes, the qualifier is skippable.} globally stable} solution for the synchronization of a grid-connected VSC, upgrading the standard PLL design based on a $\tt dq$ transformation of the system dynamics. More precisely, we complement the conventional synchronization scheme with an {\em observer} that reconstructs the $\tt dq $ grid voltage---an information that is next provided to a PLL to ensure synchronization.}  In a series of papers, starting in \cite{ORTetalscl15} and further generalized in \cite{ORTetalaut21}, a new procedure to design state observers for state-affine systems, called generalized parameter estimation-based observers (GPEBO), has been pursued. The main novelty of GPEBO is that the state observation problem is reformulated as a problem of {\em parameter estimation}---a transformation that is made possible exploiting  the properties of the {\em principal matrix solution} of an unforced linear time-varying (LTV) system $\dot x = A(t)x$ that is constructed in GPEBO. This novel technique has been successfully applied to solve various open problems in observation theory and in several practical applications. 

To tackle the problem at hand we propose in this paper to use the GPEBO technique to derive a linear regressor equation (LRE) needed for the estimation of the the grid voltage. To carry out this task we use a classical least-squares estimator with forgetting factor (LS+FF) \cite{LJUbook,SASBODbook,TAObook}. The addition of the latter was done to preserve the alertness of the estimator and be able to track slowly time-varying parameter variations. In the paper  we show that the parameters of the regressor equation can be---globally and exponentially---estimated under the  classical {\em persistency of excitation} assumption, which turns out to be verified in the present application.

The remainder of the paper is organized as follows. In Section \ref{sec2} we present the model of the system and the synchronization problem formulation. In Section \ref{sec3} we derive the LRE used in the LS+FF estimator. The main result is given in Section   \ref{sec4}. Some illustrative simulation results are presented in Section \ref{sec5}. We wrap-up the paper with concluding remarks and future research in Section \ref{sec6}. In Appendix A we do some additional analysis of the system model.\\

\noindent {\bf Notation.}  Given $n \in \intnum_{+}, q \in \intnum_{+}$, $\mathbb{I}_n$ is the $n \times n$ identity matrix and ${\bf 0}_{n\times q}$  is an $n \times q$ matrix of zeros. $\bfe_q \in \rea^n$ denotes the $q$-th vector of the $n$-dimensional Euclidean basis. For $x \in \rea^n$, we denote the square of the Euclidean norm as $|x|^2:=x^\top x$. Given an $n$-dimensional vector $x=\col(x_1,x_2,\dots,x_n)$ and two integer numbers $r$ and $s$ such that $n \geq r>s \geq 1$ we define the subvector $x_{s,r}:=\col(x_s,x_{s+1},\dots,x_r)$. Given $J:=\begmat{0&-1\\1&0}$ and $\alpha\in \rea$ we denote the rotation matrix as
$e^{J\alpha}=\begin{bmatrix}
       \cos(\alpha)&-\sin(\alpha)\\
       \sin(\alpha)&\cos(\alpha)
       \end{bmatrix}.$ 
The symbol  $\bfeps_t$ stands for a generic signal exponentially converging to zero. Given a differentiable signal $u(t)$, we define the derivative operator $\calp^i[u]=:{d^iu(t)\over dt^i}$ and denote the action of a linear time-invariant (LTI) filter $F(\calp) \in \rea(\calp)$  as $F[u]$. To avoid cluttering the notation we omit the subindex $(\cdot)_{\tt abc}$ usually added to denote symmetric AC three-phase signals represented in the classical ${\tt abc}$ reference frame \cite[Section 2]{SCHetal}.  
%
 \section{Model of the System and Synchronization Objective}
\lab{sec2}
%
In this section we give a brief derivation of the mathematical model of the system we consider in the paper. The interested reader is referred to the classical textbook \cite{YAZIRAbook} or the recent survey paper on the topic \cite{SCHetal} for further details on microgrid systems modeling. 
%
\subsection{Model of the grid connected VSC}
\lab{subsec21}
{
We consider a three-phase, balanced two-level VSC interfaced to a balanced AC grid. The VSC is characterized by six switches---that are supposed to be ideal (neglecting the diodes nonlinear dynamics), bidirectional and mutually synchronized---and is interfaced to the AC subsystem through a phase reactor.

The average model of the three-phase VSC, in the standard {\tt abc} reference frame is given by
\begin{equation}\label{vsc_abc}
L{di \over dt}=-r i+v_\mathrm{dc}m-v,
\end{equation}
where: $i(t)\in\mathbb{R}^3$ denotes the three--phase AC current through the phase reactor; $v(t)\in\mathbb{R}^3$ denotes the three-phase AC voltage at the point of common coupling (PCC) (or equivalently the voltage of the filter); $v_\mathrm{dc}\in\mathbb{R}_{>0}$ denotes the DC voltage; $m\in[\underline{m}\;\overline{m}]^3\subset\mathbb{R}^3$ denotes the three-phase modulation indices;  $L\in\mathbb{R}_{>0}$, $r\in\mathbb{R}_{>0}$ denote respectively the inductance and resistance of the phase reactor. The AC subsystem is modeled using the Thevenin equivalent circuit---which consists of the series connection of a three-phase voltage source with an \textit{RL} circuit---and is interfaced to the VSC via an \textit{RC} filter. Hence, it can be described by
  \begin{equation}\label{grid_abc}
  \begin{aligned}
  L_g{d i_g \over dt}&=-r_gi_g+v -v_g\\
  C\dot{v} &= -i_g+i,
  \end{aligned}
  \end{equation}
where: $i_g(t)\in\mathbb{R}^3$, $v_g(t) \in\mathbb{R}^3$ denote the three--phase ac current and input voltage of the grid, respectively; $L_g\in\mathbb{R}_{>0}$, $r_g\in\mathbb{R}_{>0}$ and $C\in\mathbb{R}_{>0}$ denote respectively the inductance, resistance associated to the grid impedance and the capacitance of the filter. \\

The following assumptions are made in the paper.\smallbreak

\begin{assumption}[AC grid]\label{ass:VAC}  The Thevenin equivalent voltage source is described by a three-phase balanced purely sinusoidal signal:
\begin{equation}
\lab{vg}
v_{g}(t) = {\sqrt{\frac{2}{ 3}}}V_g \begin{bmatrix}
\sin(\omega t)\\
\sin(\omega t-\frac{2}{3}\pi)\\
\sin(\omega t+\frac{2}{3}\pi)
\end{bmatrix}, 
\end{equation}
 where $V_g\in\mathbb{R}_{>0}$ and $\omega\in\mathbb{R}_{>0}$  are {\em both unknown.}
\end{assumption}\smallbreak

\begin{assumption}[DC grid]\label{ass:VDC}  $v_\mathrm{dc}=V_\mathrm{dc}$, with  {\em known} constant $V_\mathrm{dc}\in\mathbb{R}_{>0}$
\end{assumption}\smallbreak

\begin{assumption}[Parameters]
\lab{ass3}
The  positive grid parameters $L_g$ and $r_g$ are {\em known}. 
\end{assumption}

\begin{assumption}[Measurements]
\lab{ass4}
The signals $i_{g}$, $v$ and $i$ are {\em measurable}.
\end{assumption} 

Assumption \ref{ass:VAC} is justified for AC grids that are characterized by a sufficiently large number of synchronous rotating machines, providing large inertia, and/or whenever a sufficiently tight regulation of the frequency is implemented via grid-forming controllers. Assumption \ref{ass:VDC} is legitimized whenever additional controllers take responsibility of ensuring regulation of the DC voltage. Assumption \ref{ass3} is, on the other hand, quite restrictive.  Notice, however, that $V_g$ is not assumed to be known, Aee also point {\bf P4} in Subsection \ref{subsec24}. Assumption \ref{ass4} is quite natural, as it is verified in all practical scenarios.
%
\subsection{A  framework to formulate the synchronization problem}
\lab{subsec22}
For a correct and safe operation of the grid-connected VSC, an essential requirement is that, after reasonable transients, the state and input variables  $\{i_g,v,i,m\}$ converge to a balanced, three-phase AC signal with the common grid frequency $\omega$, but different amplitudes and phase shifts.  This fact complicates both the steady-state analysis and the control design since the steady-states of interest are time-varying. Furthermore, a suitable control of the reactive power is of capital importance for an appropriate operation of the system, and the absence of a clear definition in such coordinates stymies a convenient formulation of the control objectives. 

One way to deal with these issues is to represent the system in a suitably defined $\tt dq$ reference frame that guarantees that the steady-states of interest, in the new coordinates, correspond to constant quantities---see \cite[Section 2]{SCHetal}. However, to guarantee that these signals are constant, the transformation angle---say, $\alpha$---of the $\tt dq $ transformation must verify $\dot\alpha=\omega$ at steady-state, but the latter is unknown. The derivations below will provide an answer to this frequency synchronization problem.

We propose then to apply the $\tt dq$ transformation with an angle $\vartheta-\frac{\pi}{2}$ to all variables of the system \eqref{vsc_abc}, \eqref{grid_abc}, where $\vartheta(t) \in \mathbb{R}$ is the solution of the differential equation
 \begequ
 \lab{dotvarthe}
 \dot \vartheta=u_{\vartheta},
 \endequ
and $ u_{\vartheta}(t) \in \mathbb{R}$ is some function  {\em to be designed}. That is, we define the new signals
$$
(\cdot)_{\tt dq}:=T_{\tt dq}\left(\vartheta - {\pi \over 2}\right)(\cdot)_{\tt abc},
$$
with the definition of $T_{\tt dq}$ of \cite[eq. (2.3)]{SCHetal}. This manipulation yields the system dynamics, 
\begin{align}
    \begin{bmatrix}
    L_g{d{{i}_{g\tt{dq}}} \over dt}\\
    C{d{v}_{\tt{dq}}\over dt}\\
    L{d{i}_{\tt{dq}}\over dt}
    \end{bmatrix}=&
    \begin{bmatrix}
     -r_g\mathbb{I}_2&\mathbb I_2&{\bf 0}_{2\times 2}\\
   -\mathbb I_2& {\bf 0}_{2\times 2} &\mathbb{I}_2\\
    {\bf 0}_{2\times 2}&-\mathbb{I}_2& -r\mathbb{I}_2
    \end{bmatrix}
    \begin{bmatrix}
    i_{g\tt{dq}}\\ v_{\tt{dq}}\\ {i}_{\tt{dq}}
    \end{bmatrix}+\begin{bmatrix}
    L_gJi_{g\tt{dq}} & {\bf 0}_{2\times 2}\\
    CJv_{\tt{dq}} & {\bf 0}_{2\times 2} \\
    {LJi_{\tt{dq}}} & \mathbb{I}_2
    \end{bmatrix} \begmat{u_{\vartheta} \\ u_{\tt{dq}}}
    +\begin{bmatrix}
    - v_{g\tt dq} \\{\bf 0}_{2 \times 1}\\{\bf 0}_{2 \times 1}
    \end{bmatrix}\label{eq:sys-dq}
\end{align}
 where we  defined the new control signal
\begin{equation}\label{eq:vdc-vgdq}
u_{\tt{dq}}  :=V_\mathrm{dc}m_{\tt{dq}},   
\end{equation}
and the rotated grid voltage
\begequ
v_{g\tt dq}:=V_g e^{J\delta}\mathbf{e}_1,
\label{eq:vgdq}
\endequ
where $\delta\in\mathbb R$ is defined as the {\em error signal}
 \begin{equation}
  \delta:=\vartheta-\omega t.
 \end{equation}
This signal clearly satisfies the dynamics equation
\begin{equation}\label{eq:dot-delta}
    \dot\delta=-\omega+u_{\vartheta},
\end{equation}}
and it is treated as a new {\em unmeasurable state variable} added to the system dynamics \eqref{eq:sys-dq}.

Given this scenario the synchronization objective of interest that should be achieved via the suitable selection of the ``control signal" $ u_{\vartheta}$ is to ensure that
{
\begequ
\lab{synobj0}
\liminf \delta(t)={\phi^\mathrm{ref}}\quad \Leftrightarrow \quad \liminf |\vartheta(t) -\omega t|={\phi^\mathrm{ref}},
\endequ
for some desired ${\phi}^\mathrm{ref}\in\mathbb R$, that is, the $\tt dq$ transformation is synchronized with the angle $\vartheta^\mathrm{ref}(t):=\omega t+{\phi}^\mathrm{ref}$. {The question of how to select an appropriate $\phi^\mathrm{ref}$ to guarantee a desired operation of the VSC is briefly discussed in Section~\ref{sec5}.
}
%
\subsection{Observer formulation of the adaptive synchronization problem}
\lab{subsec23}
{
As indicated in the Introduction the adaptive synchronization problem is solved in this paper translating it into an adaptive state observation problem, to which we apply the GPEBO technique. Towards this end, we observe that the signal $v_{g\tt dq}$, that we treat as an {\em unmeasurable state}, satisfies the differential equation 
\begin{equation}\label{eq:dot-vg}
\begin{aligned}
    \dot v_{g\tt dq}&=-\dot\delta J v_{g\tt dq}\\
                    &=-\omega J v_{g\tt dq}+u_\vartheta J v_{g\tt dq}.
    \end{aligned}
\end{equation}
Therefore, the synchronization objective \eqref{synobj0} can be recast as follows.
\begenu
 \item[{\bf SO}] Select the ``control signal" $ u_{\vartheta}$ to ensure that the unmeasurable state $v_{g\tt dq}$ satisfies
\begequ
\lab{synobj}
\liminf v_{g\tt dq}(t)=V_ge^{J{\phi}^\mathrm{ref}}\mathbf{e}_1.
\endequ
\endenu
}
Clearly, to solve this problem it is necessary to design an adaptive observer for the state $v_{g\tt dq}$, which is the main contribution of this paper. 

To enhance readability, before proceeding to formulate the adaptive synchronization problem, we find convenient to rewrite the system   \eqref{eq:sys-dq}, \eqref{eq:dot-vg} using the standard control theory notation. Towards this end define the {(non-measurable) state, and (measurable)} output\footnote{The fact that $i_{g\tt{dq}},v_{\tt{dq}}$ and ${i}_{\tt{dq}}$ are measurable stems from Assumption \ref{ass4} and the fact that the rotation angle $\vartheta$, defined in \eqref{dotvarthe}, is known.} and input vectors as
$$
{x:={1 \over L_g}v_{g\tt dq}\in\mathbb R^2,}\;
y:= \begmat{
    i_{g\tt{dq}}\\ v_{\tt{dq}}\\ {i}_{\tt{dq}}} \in \rea^6,\;u:= \begmat{u_{\vartheta} \\ u_{\tt{dq}}} \in \rea^3,
$$
respectively, and write the system dynamics as 
\begali{
\dot y =&    
    \begmat{
     -{r_g \over L_g}\mathbb{I}_2&{1 \over L_g}\mathbb I_2&{\bf 0}_{2\times 2}\\
   -{1 \over C}\mathbb I_2& {\bf 0}_{2\times 2} &{1 \over C}\mathbb{I}_2\\
    {\bf 0}_{2\times 2}&-{1 \over L}\mathbb{I}_2& -{r \over L}\mathbb{I}_2
    }y+\begmat{
    J \ya & {\bf 0}_{2\times 2}\\
    J \yb & {\bf 0}_{2\times 2} \\
    J \yc & {1 \over L}\mathbb{I}_2
    }u
    +\begmat{
    -x\\{\bf 0}_{2 \times 1}\\{\bf 0}_{2 \times 1}
    },
 \label{sys}
}
where the state $x$ satisfies the differential equation
\begequ
\lab{sysx}
{\dot x}  {=-\omega J x+u_1 J x}
\endequ

\noindent {\bf Adaptive synchronization problem} {Consider the system \eqref{vsc_abc}, \eqref{grid_abc} verifying  {\bf Assumptions A1-A4}, and its $\tt dq$ representation \eqref{sys}. Design a globally, exponentially convergent {\em adaptive observer}  of the signal {$x$ of the form
\begali{
	\nonumber
	\dot {\xi} & =f(\xi,y,u)\\
	\lab{adaobs}
	\hat x & =h(\xi,y,u),
}
where $f: \rea^{n_\xi} \times \rea^6 \times \rea^3  \to \rea^{n_\xi}$, with the dimension of the observer state $n_\xi>0$ to be defined, and $h:  \rea^{n_\xi} \times\rea^{6}  \times \rea^3 \to \rea$, and a function $e_\delta:\mathbb R^2\rightarrow [-\pi\;\pi)$ such that the  {\em certainty-equivalent} adaptive PLL
\begali{
\nonumber
\dot x_c &= e_\delta(\hat x) \\
\lab{adapi}
u_{1} &=-K_{P\delta} e_\delta(\hat x) -K_{I\delta} x_c,
} 
with gains $K_{P\delta}>0$, $K_{I\delta}>0$ ensures  the synchronization objective \eqref{synobj}. Moreover, this should be achieved---for \textit{almost} all the system and controller initial conditions---guaranteeing that all signals remain bounded.}
\subsection{Discussion}
\lab{subsec24}
The following remarks are in order.\\

\noindent {\bf P1}
We underscore the fact that $\omega$ is not known and  {$x$} is not available for measurement. Therefore, in view of the dynamics \eqref{sysx}, we are dealing with the task of designing an adaptive observer, where both of them have to be estimated. We recall here that since we are dealing with high-inertia grids it is reasonable to assume that $\omega$ is constant. \\

\noindent {\bf P2}
The function $e_\delta$ is usually referred as a phase detector, that is a nonlinear function whose output contains the phase difference between two input oscillating signals. The most popular phase detectors employed in the power systems literature are based on suitable $\tt dq$ transformations of the input oscillating signals~\cite{WANGetal}.\\

\noindent {\bf P3}
{If we replace the state estimate $\hat x$ in  the phase-locked loop \eqref{adapi} by the actual state $x$, and design alternatively
\begin{equation}\label{eq:PD}
\begin{aligned}
e_{\delta}(x)&=-\cos({\phi}^\mathrm{ref})x_2+\sin({\phi}^\mathrm{ref})x_1,\\
\mathrm{or}\quad
e_{\delta}(x)&=-\atantwo(x_2,x_1)+{\phi}^\mathrm{ref},
\end{aligned}
\end{equation}
 we obtain the nonlinear, second-order model of the generalized SRF-PLL or ATAN-PLL, whose almost global stability properties have been already demonstrated in \cite{RAN,ZONetal}. Conventional SRF- and ATAN-PLL aligned with the $\tt q$ axis can be further recovered by picking ${\phi}^\mathrm{ref}=0$.}\\

\noindent {\bf P4}
An algorithm that includes the {\em estimation of the grid parameters} $r_g$ and $L_g$ can be further designed. As expected, this scheme is more complicated than the previous one and, in particular, gives rise to a nonlinearly parametrized regression equation (NLPRE)---instead of the LRE that we had before. Since simulations have shown that the first solution is quite insensitive to imprecise knowledge of these parameters, we tend to believe that this more complicated solution, although of theoretical interest, is not necessary for a practical application.  
%
\section{Reducing the  Problem of Observation of $x$ to Parameter Estimation}
\lab{sec3}
%
In this section we apply the GPEBO procedure \cite{ORTetalaut21} to translate the problem of observation of the state variable  {$x$} to one of parameter estimation.   Following the standard procedure \cite{LJUbook,SASBODbook}, to carry-out the parameter estimation, we construct  a vector LRE that is used in a classical LS adaptation algorithm.  In this way, the estimated parameters can be used to generate the estimated state $\hat x$ for the implementation of the adaptive PLL \eqref{adapi} with guaranteed stability properties.
\subsection{Construction of the LRE and expression for  {$x$}}
\lab{subsec31}

\begin{proposition}
\lab{pro1}\em
Consider the system \eqref{vsc_abc}, \eqref{grid_abc} verifying  {\bf Assumptions A1-A4}, and its $\tt dq$ representation \eqref{sys}. The unmeasurable state $x$ satisfies the following algebraic equation
\begequ
\lab{xi}
\x=\calw \bfthe,
\endequ
with $\calw (t) \in\rea^{2 \times 3}$ a {\em measurable} signal and $\bfthe\in\rea^3$ a vector of {\em constant, unknown} parameters which satisfies the LRE
\begequ
\lab{lre}
\bfy=\bfome   \bfthe + \bfeps_t,
\endequ 
where the signals  $\bfy(t)\in\rea^2$ and $\bfome  (t) \in\rea^{2 \times 3}$ are {\em measurable}.
\end{proposition}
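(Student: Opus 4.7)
The plan is to apply the GPEBO methodology of \cite{ORTetalaut21}, which recasts the problem of reconstructing an unmeasurable state as one of estimating a vector of constant parameters. The first key step is the measurable coordinate change $\eta(t):=e^{-J\vartheta(t)}\,x(t)$. Using $\dot\vartheta=u_\vartheta$, the dynamics \eqref{sysx}, and the commutativity of $J$ with the planar rotation $e^{-J\vartheta}$, a direct computation yields $\dot\eta=-\omega J\eta$, so that the \emph{only} unknown parameter in the $\eta$-dynamics is the scalar $\omega$. Integrating once then produces the exact identity
\[
\eta(t)=\eta(0)-\omega\, J\, N(t),\qquad N(t):=\int_0^t \eta(s)\,ds,
\]
which is linear in the three constants $\bfthe:=\col(\eta_1(0),\eta_2(0),\omega)\in\rea^3$.

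Undoing the coordinate change via $x=e^{J\vartheta}\eta$ delivers $x(t)=e^{J\vartheta(t)}[\,\mathbb{I}_2,\;-J N(t)\,]\,\bfthe=:\calw(t)\,\bfthe$, so the remaining task is to exhibit $N(t)$ (hence $\calw$, $\bfy$, $\bfome$) as measurable signals. For this I combine the first block of \eqref{sys}, which gives $x=-\dot i_{g\tt dq}-(r_g/L_g)i_{g\tt dq}+(1/L_g)v_{\tt dq}+u_\vartheta Ji_{g\tt dq}$, with the chain-rule identity $e^{-J\vartheta}\dot i_{g\tt dq}=\tfrac{d}{dt}(e^{-J\vartheta}i_{g\tt dq})+u_\vartheta Je^{-J\vartheta}i_{g\tt dq}$. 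A direct substitution shows that the $u_\vartheta J$ cross-terms cancel, reducing the expression for $\eta$ to $\eta=-\tfrac{d}{dt}(e^{-J\vartheta}i_{g\tt dq})+\rho$, with $\rho(t)$ a measurable signal built from $i_{g\tt dq}$ and $v_{\tt dq}$. Integrating from $0$ to $t$ eliminates the derivative and gives $N(t)$ as an explicit measurable combination plus the constant $e^{-J\vartheta(0)}i_{g\tt dq}(0)$, which is harmlessly absorbed into the first two entries of $\bfthe$ by redefining $\eta(0)$.

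The LRE $\bfy=\bfome\bfthe+\bfeps_t$ is extracted from the same identity by applying a strictly proper stable filter (e.g.\ $1/(\calp+\lambda)$) so as to transfer the surviving derivative of $e^{-J\vartheta}i_{g\tt dq}$ onto a measurable signal; the filter's initial-condition transient contributes only to the $\bfeps_t$ term. The main technical obstacle is precisely this bookkeeping: every initial-condition constant and every filter transient must be channelled into either the common 3-vector $\bfthe$ or into $\bfeps_t$, so that the algebraic identity $x=\calw\,\bfthe$ and the regressor equation $\bfy=\bfome\,\bfthe+\bfeps_t$ share the \emph{same} parameter vector, with $\calw$, $\bfy$ and $\bfome$ computable entirely from $\{i_{g\tt dq},v_{\tt dq},i_{\tt dq},\vartheta,u_\vartheta\}$.
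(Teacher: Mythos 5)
Your proposal is correct, and it arrives at the same kind of regression as the paper---three unknown constants, namely the grid frequency together with a two-dimensional initial condition---but through a genuinely different construction. The paper follows the generic GPEBO recipe: it introduces the dynamic extensions $\dot z_{1,2}=u_1Jz_{1,2}+Jq$ and $\dot z_{3,4}=u_1J(z_{3,4}-J\ya)$, defines the error $e=\omega(z_{1,2}+z_{3,4}-J\ya)+x$, shows that $\dot e=u_1Je$, and invokes the principal matrix solution $\Phi$ of this LTV system (generated online by $\dot\Phi=u_1J\Phi$, $\Phi(0)=\mathbb{I}_2$) to get $e=\Phi e(0)$, hence $\calw=\begmat{-(z_{1,2}+z_{3,4}-J\ya)&\Phi}$ and $\bfthe=\col(\omega,e(0))$. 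You instead exploit the special structure of \eqref{sysx}, in which the state matrix $(u_1-\omega)J$ commutes with the rotation $e^{-J\vartheta}$: the change of coordinates $\eta=e^{-J\vartheta}x$ gives $\dot\eta=-\omega J\eta$, one time integration exposes linearity in $(\eta(0),\omega)$, and integration by parts on the measured relation $\dot\ya=q-x$ (your cancellation of the $u_\vartheta J$ cross terms is exactly right, since $u_1=u_\vartheta$) makes the running integral $N(t)$ computable from $\{\ya,\yb,\vartheta\}$; in effect you compute the paper's $\Phi$ in closed form as $e^{J(\vartheta(t)-\vartheta(0))}$ and replace the $z$-filters by the explicit integral of $\rho$ plus boundary terms, and your bookkeeping of the constant $e^{-J\vartheta(0)}i_{g\tt dq}(0)$ (absorb it into $\bfthe$, or simply note it is known since $\vartheta(0)$ and $i_{g\tt dq}(0)$ are) is sound. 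Both routes finish with the same stable-filtering step, sending filter transients into $\bfeps_t$ and keeping one common $\bfthe$ in \eqref{xi} and \eqref{lre}. What your route buys is economy and transparency: no $\Phi$-ODE and fewer auxiliary states, with an explicit regressor. What the paper's route buys is genericity and implementation robustness: it does not rely on a closed-form transition matrix, so it extends unchanged to state matrices that are not scalar multiples of $J$ (e.g.\ when unknown grid parameters enter the dynamics, as discussed in point {\bf P4}), and it uses marginally stable rotating filters rather than an open-loop integrator, which behaves slightly better under measurement bias; under the ideal Assumptions A1--A4 of the proposition, however, both constructions are equally valid.
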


 \begin{proof}
The measurable signal $\ya$ can be rewritten as:
  \begali{
\lab{dotya} 
     \dot y_{1,2} &=q- {x},
}
 where we defined the signal
 \begali{
\nonumber
  q & :=-\frac{r_g}{L_g}\ya + \frac{1}{L_g}\yb+J \ya u_1,
 }
and underscore the fact that it is {\em measurable} and recall that
\begalis{
{\dot x}  &=-\omega J x+u_1 J x\\
 &=u_1 J x-\omega J(q-\dot y_{1,2}).
}
Following the GPEBO construction we define the following dynamic extension with states $z \in\rea^4$ 
\begalis{
	\dot z_{1,2}& =   u_1 J \za + J q\\
\dot z_{3,4}& =   u_1 J( \zb-J \ya).
}
Define now the signal
\beal{
	\lab{e}
	e:= \omega (\za + \zb - J \ya)  +  {x}.
}
whose time derivative is given as
\begalis{
\dot e=& \omega [ u_1 J \za + J q+ u_1 J (\zb -  J\ya) - J    \dot y_{1,2}]  +  {\dot x} \\
 = & \omega u_1 J  (\za +  \zb - J \ya )  + u_1 J  {x} \\
=&  u_1 J [ \omega (\za + \zb - J \ya) +  {x}] \\
=&u_1J e.
}
Following the GPEBO method we introduce the matrix equation
$$
\dot \Phi=u_1J\Phi,\;\Phi(0)= \mathbb{I}_2,
$$
that is the principal matrix solution of the LTV system $\dot e=-u_1Je$.\footnote{The choice of initial conditions $\Phi(0)= \mathbb{I}_2$ is done, without loss of generality, to simplify the notation---see \cite[Property 4.4]{RUGbook} for the general case.} Invoking the properties of this matrix \cite{RUGbook}, we can write
$$
e(t)=\Phi(t)e(0),\;\forall t \geq 0..
$$
Replacing the identity above in \eqref{e} yields
\begequ
\lab{xi1}
	 {x}= \Phi e(0)-\omega (\za + \zb - J \ya),
\endequ
To establish \eqref{xi} we introduce  the definitions
\begalis{
 {\calw } &:= \begmat{-(\za + \zb - J \ya) & \Phi}\\
\bfthe &:=\begmat{\omega \\ e(0)}.
}	

To obtain the LRE \eqref{lre} we replace \eqref{xi1} in \eqref{dotya} to get
\begequ
\lab{firlre}
     \dot y_{1,2} =q+\omega (\za + \zb - J \ya)- \Phi e(0).
\endequ

To generate the LRE needed for the parameter estimation we proceed, as usual, applying some filtering to the equation above. Towards this end, we fix an LTI, stable filter
	\begequ
	\lab{f}
	F(\calp)={\lambda \over \calp+\lambda},
	\endequ
	with $\lambda>0$. We apply this filter to \eqref{firlre}
	\begalis{
		\calp F(\calp)[\ya]  = &F(\calp)[q]+F(\calp)[\za + \zb - J \ya]\omega-F(\calp)[\Phi]  e(0),
		}
which may be written as	the LRE \eqref{lre} with the definitions
\begalis{
\bfy &:= \calp F(\calp)[\ya] - F(\calp)[q]\\
\bfome   &:= -F(\calp)[ {\calw }],
}
and $\bfeps_t$ stands for the effect of the initial conditions of the various stable filters.	
\end{proof}
\subsection{Summary of the adaptive observer of the state $x$}
\lab{subsec32}
%
The adaptive observer given in \eqref{adaobs} is summarized, with $ {{n_\xi}=8}$, as follows:
\begali{
\lab{imp}\dot\xi=
	  \begmat{\dot z_{1,2}\\ \dot z_{3,4}\\\dot \Phi_1\\\dot \Phi_2}= \begmat{u_1Jz_{1,2}-\frac{r_g}{L_g}{J}\ya + \frac{1}{L_g}{J}\yb- \ya u_1 \\ u_1J(z_{3,4}-J y_{1,2})\\ u_1J \Phi_1\\u_1J\Phi_2 \\ },
}
with
$$
\xi(0)=\col{(z_{1,2}(0),z_{3,4}(0),\bfe_1,\bfe_2)}
$$
where \footnote{Please note that we have introduced the notation $\Phi=:\begmat{\Phi_1&|&\Phi_2}$ where $\Phi_i(t) \in \rea^2,\;i=1,2$, are the columns of the $2 \times 2$ matrix $\Phi$. Also, to avoid cluttering the notation the readout map $h({\xi},y,u)$ of  \eqref{adaobs} is replaced by the expression \eqref{hatxi} with the explicit appearance of the estimated parameters $\hat \bfthe$.}
\begequ
\lab{hatxi}
\hat x =\begmat{(-\za - \zb + J \ya) &  \Phi_1 &  \Phi_2}\hat \bfthe,
\endequ
and $\hat \bfthe$ is generated by the parameter estimator given in Section \ref{sec4}. 
 \section{Main Result}
\lab{sec4}
%
In this section we present our main result. As indicated in the Introduction besides the synchronization objective to be accomplished with a suitable selection of the signal $u_1$, which is our main concern in the paper, there is also the requirement to design $u_{2,3}$ to drive the grid current   $\yc$ towards a desired value  $\yc^\mathrm{ref}$. This task is pretty standard and has a rather trivial solution. 

To enhance readability we split into two different subsections the solutions to both problems.  
\subsection{A PI controller to regulate the VSC current to a constant desired value}
\lab{subsec41}
Before proceeding to present the solution of the problem we make the following observations. First, we recall that, as shown in Appendix \ref{appa}, the set of assignable equilibria of the system \eqref{sys} can be 
characterized by a mapping
$$
y_{1,4}=W( {x},y_{5,6}),
$$
{with $x=V_ge^{J\delta}\mathbf{e}_1$.} Hence,  ${\delta}$ and $y_{5,6}$ can be fixed at {\em any constant desired} value ${\phi^\mathrm{ref}}$ and $y^\mathrm{ref}_{5,6}$. 

Second, the dynamics of the VSC current may be written in the form
$$
\dot y_{5,6}=-{r \over L}y_{5,6} + {1 \over L} u_{2,3}-[{1 \over L} y_{3,4} -Jy_{5,6}u_1].
$$
Hence, if the control signal contains a term that cancels the term in brackets the remaining dynamics is linear. This, partial linearizing approach is the one adopted below.
  
\begin{proposition}
\lab{pro2}\em
Consider the system \eqref{vsc_abc}, \eqref{grid_abc} verifying  {\bf Assumptions A1-A4}, and its $\tt dq$ representation \eqref{sys}.  Define the error signal
$$
\tilde y_{5,6}:=\yc-\yc^\mathrm{ref}.
$$ 
The classical partial linearizing plus PI controller
\begin{equation}
\label{eq:current-controller}
\begin{aligned}
\dot x_{c} &= \tilde y_{5,6}\\
u_{2,3} &=-K_{P}\tilde y_{5,6} -K_{I}x_{c}+y_{3,4}-LJ y_{5,6} u_1,
\end{aligned}    
\end{equation}
with the PI gains $K_{P} \in \rea^{2 \times 2}$ and  $K_{I} \in \rea^{2 \times 2}$ positive definite, ensures that 
\begalis{
\liminf \tilde y_{5,6}(t)& =0,\quad (exp).
}
\end{proposition}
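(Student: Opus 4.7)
The plan is to substitute the proposed control law \eqref{eq:current-controller} into the VSC current dynamics extracted from \eqref{sys}, verify that the feedforward terms $y_{3,4}$ and $-LJy_{5,6}u_1$ exactly cancel the cross-coupling terms $\tfrac{1}{L}y_{3,4}-Jy_{5,6}u_1$ appearing in the plant, and thus obtain a closed-loop error system that is linear time-invariant. Once this reduction is made, exponential stability is a straightforward consequence of a quadratic Lyapunov argument together with the fact that asymptotic stability and exponential stability coincide for LTI systems.

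Concretely, after substitution the current dynamics become
$$
\dot y_{5,6} \;=\; -\tfrac{r}{L}\,y_{5,6} \;-\; \tfrac{1}{L}K_P\,\tilde y_{5,6} \;-\; \tfrac{1}{L}K_I\, x_c.
$$
Since $K_I$ is invertible, the unique equilibrium compatible with $\tilde y_{5,6}=0$ is $x_c^\star := -r\,K_I^{-1}\,y_{5,6}^{\mathrm{ref}}$. Introducing the shifted integral state $\tilde x_c := x_c - x_c^\star$, and using $\dot x_c = \tilde y_{5,6}$, the error dynamics read
$$
\begin{bmatrix}\dot{\tilde y}_{5,6}\\[2pt] \dot{\tilde x}_c\end{bmatrix}
\;=\;
\underbrace{\begin{bmatrix}-\tfrac{1}{L}(r\mathbb{I}_2+K_P) & -\tfrac{1}{L}K_I \\[3pt] \mathbb{I}_2 & \mathbf{0}_{2\times 2}\end{bmatrix}}_{=: A_{\mathrm{cl}}}
\begin{bmatrix}\tilde y_{5,6}\\[2pt] \tilde x_c\end{bmatrix}.
$$

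To show that $A_{\mathrm{cl}}$ is Hurwitz I would use the standard Lyapunov function
$$
V(\tilde y_{5,6},\tilde x_c) \;:=\; \tfrac{L}{2}\,|\tilde y_{5,6}|^2 \;+\; \tfrac{1}{2}\,\tilde x_c^{\top}K_I\,\tilde x_c,
$$
which is positive definite because $K_I = K_I^{\top}>0$. A direct computation, in which the cross terms $-\tilde y_{5,6}^{\top}K_I\tilde x_c + \tilde x_c^{\top}K_I\tilde y_{5,6}$ cancel thanks to the symmetry of $K_I$, yields
$$
\dot V \;=\; -\,\tilde y_{5,6}^{\top}\,(r\mathbb{I}_2 + K_P)\,\tilde y_{5,6} \;\le\; 0.
$$
This proves uniform boundedness of trajectories. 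LaSalle's invariance principle then forces $\tilde y_{5,6}\equiv 0$ on the largest invariant set of $\{\dot V = 0\}$; substituting back in the first block of the error dynamics gives $K_I\tilde x_c = 0$, i.e.\ $\tilde x_c = 0$. Hence $A_{\mathrm{cl}}$ is asymptotically stable, and since the system is LTI this is equivalent to $A_{\mathrm{cl}}$ being Hurwitz, so $\tilde y_{5,6}(t)\to 0$ \textit{exponentially}.

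There is no real obstacle in this proof: the only delicate point is the algebraic cancellation of the open-loop cross-coupling by the feedforward part of \eqref{eq:current-controller}, which reduces a nonlinear interconnected system to a linear one. Everything else is a textbook PI-with-feedforward argument and the final exponential rate can be read off the eigenvalues of $A_{\mathrm{cl}}$ (or, alternatively, by tuning a strict Lyapunov function by adding a small cross-term $\epsilon\,\tilde y_{5,6}^{\top} \tilde x_c$ to $V$ to render $\dot V$ negative definite).
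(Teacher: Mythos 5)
Your proof is correct and follows essentially the same route as the paper: the feedforward terms $y_{3,4}-LJy_{5,6}u_1$ exactly cancel the coupling, leaving a linear PI error loop whose exponential stability is standard. If anything, your version is more careful than the paper's one-line argument, since you retain the $\tfrac{1}{L}\left(r\mathbb{I}_2+K_P\right)$ damping and the equilibrium shift of the integrator state, and you justify Hurwitzness of the matrix-gain loop explicitly via a Lyapunov--LaSalle argument (using symmetry of $K_I$) rather than simply asserting stability of $(\calp^2+K_P\calp+K_I)[\tilde y_{5,6}]=0$.
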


\begin{proof}
In view of the cancellation of the additive terms $-y_{3,4}+LJ y_{5,6} u_1$ the closed-loop dynamics of $\tilde y_{5,6}$ is the simple LTI system
$$
(\calp^2+ K_{P} \calp+K_{I})[\tilde y_{5,6}]=0,
$$
whose stability for positive gain $K_P$ and $K_I$ is trivially established. 
 \end{proof}
%
\subsection{An almost globally stable adaptive {PLL}}
\lab{subsec42}
%
In this section we complete the design of the proposed adaptive PLL. Towards this end, we propose to use a standard LS+FF scheme to estimate the parameters $\bfthe$ using the LRE \eqref{lre}. Then, we replace---in the classical certainty-equivalent way---these estimated parameters in the full information controller   \eqref{xi} to obtain  \eqref{hatxi}, and  then use the latter in the adaptive  {PLL} \eqref{adapi}. 

To state the main result the following classical assumption of persistent excitation of the regressor matrix $\bfome$ is imposed. As is well-known \cite{SASBODbook,TAObook} this assumption is necessary and sufficient to guarantee the global exponential convergence of the estimated parameters.\footnote{See the comments at the end of Section \ref{sec5} for a relaxation of this assumption.} 

\begin{assumption}[Persistent Excitation]\label{ass5}  
The regressor matrix $\bfome$ of the LRE \eqref{lre} is persistently exciting. That is, there exists constants $C_c>0$ and $T>0$ such that
	\begalis{
		&\int_{t}^{t+T} \bfome  ^\top(s) \bfome  (s)  ds \ge C_c I_3,\;\forall t \geq 0.
	}
\end{assumption}

The main result of the paper is the proof of almost global synchronization, which is established immediately from the global exponential convergence of the estimated parameters and the almost global stability properties of the SRF-PLL and ATAN-PLL established in \cite{RAN} and \cite{ZONetal}, respectively.

	\begin{proposition}
		\label{pro3}\em
		Consider the grid-connected VSC system  \eqref{vsc_abc}, \eqref{grid_abc} verifying  {\bf Assumptions A1-A4}, its $\tt dq$ representation \eqref{sys} and the associated LRE \eqref{lre} constructed as indicated in Proposition \ref{pro1}. Assume $\bfome  $ verifies  {\bf Assumption \ref{ass5}}. Let the adaptive PLL be given by \eqref{adaobs}, \eqref{adapi},   \eqref{imp} and  \eqref{hatxi}, {with $e_\phi(\hat x)$ given by \eqref{eq:PD}}, and with the estimated parameters $\hat \bfthe$  generated via the following LS+FF parameter estimator
\begsubequ
		\lab{intestt1}
		\begali{
			\lab{thegt1}
			\dot{\hat \bfthe }   & =\alpha F   \bfome^\top    (\bfy  -\bfome    \hat\bfthe  ),\; \hat\bfthe(0)=:\theta _{0} \in \rea^3\\
			\lab{phit1}
		\dot {F}  & =\;\left\{
\begarr{cc}
 -\alpha F   \bfome^\top    \bfome     F   {+ \beta F   } & \mbox{if} \; \|F  \| \leq M   \\
 0 & \mbox{otherwise}
\endarr
\right.}
		\endsubequ
		with  $F(0)={1 \over f_0} I_3$ and tuning gains $\alpha>0,\;f_0>0,\; {\beta \geq 0}$ and $M>0$. Then, the synchronization objective \eqref{synobj} is almost globally satisfied with all signals bounded.
			\end{proposition}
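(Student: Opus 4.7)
My approach decomposes the argument into three self-contained pieces: (i) exponential convergence of the LS+FF parameter estimator; (ii) exponential convergence of the state estimate $\hat x \to x$; and (iii) transfer of the almost global stability of the known full-information PLL to its certainty-equivalent counterpart via a cascade argument.

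First, I would analyze the estimator \eqref{intestt1}. Setting $\tilde\bfthe := \hat\bfthe - \bfthe$ and using \eqref{lre}, one obtains $\dot{\tilde\bfthe} = -\alpha F \bfome^\top \bfome\, \tilde\bfthe + \alpha F \bfome^\top \bfeps_t$. Working with $V = \tilde\bfthe^\top F^{-1} \tilde\bfthe$ and exploiting $\dot{F^{-1}} = \alpha \bfome^\top \bfome - \beta\mathbb{I}_3$ on the active branch of \eqref{phit1}, Assumption \ref{ass5} guarantees that $F^{-1}$ is bounded above and below by positive constants after one excitation window, so the saturation $\|F\|\le M$ is only invoked to prevent the $+\beta F$ term from blowing up during intervals of low excitation. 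Standard LS+FF arguments (see \cite{LJUbook,SASBODbook,TAObook}) then give $|\tilde\bfthe(t)|\le c\, e^{-\gamma t}$ for some $c,\gamma>0$, modulo an additive term that itself belongs to $\bfeps_t$.

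Second, combining \eqref{xi} with \eqref{hatxi} yields the clean identity $\hat x - x = \calw\,\tilde\bfthe$. Since $\Phi$ satisfies $\dot\Phi = u_1 J \Phi$ with $J$ skew-symmetric, $\Phi(t)$ is an orthogonal matrix for all $t$ and hence bounded; and $z_{1,2},z_{3,4}$ satisfy LTV equations driven by the measurable (and a priori bounded on any finite interval) signals $y$ and $u_1$. Provided all closed-loop signals remain bounded, $\calw$ is bounded and exponential convergence of $\tilde\bfthe$ carries over to $\hat x - x \to 0$ exponentially.

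Third, I would exploit a cascade structure. Replacing $\hat x$ by $x$ in \eqref{adapi} yields an autonomous second-order system in $(\delta,x_c)$ that coincides---by the choice \eqref{eq:PD}---with either the generalized SRF-PLL or ATAN-PLL, whose almost global asymptotic stability to the target equilibrium $\delta=\phi^\mathrm{ref}$ has been established in \cite{RAN,ZONetal}. The certainty-equivalent loop is then this nominal system perturbed additively by $e_\delta(\hat x)-e_\delta(x)$, which is Lipschitz in the state estimation error (globally for the linear phase detector, on compact sets avoiding the origin for the $\atantwo$ version) and hence decays exponentially. Invoking a standard vanishing-perturbation result for systems with an almost globally asymptotically stable equilibrium, together with total stability of the nominal equilibrium, yields almost global convergence of $\delta(t)\to\phi^\mathrm{ref}$, and via \eqref{eq:vgdq} the synchronization objective \eqref{synobj}. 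The converging current loop of Proposition \ref{pro2} runs in parallel and preserves boundedness of $y_{5,6}$.

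The main obstacle I anticipate is the cascade step, and in particular the need to rule out that trajectories of the perturbed loop escape the region of attraction of the nominal equilibrium during the transient in which $\hat x$ has not yet converged. The pragmatic way to handle this is to note that the set of initial conditions attracted by the unstable equilibrium of the nominal PLL is the one-dimensional stable manifold of a hyperbolic saddle, which has zero Lebesgue measure and is nowhere dense; by continuity of the flow with respect to exponentially vanishing perturbations, the perturbed system inherits the same almost global attractivity. A secondary issue is bootstrapping boundedness of $u_1$ (hence of $\calw$) before invoking convergence, which is resolved because $u_1$ is the output of a PI acting on the bounded phase detector signal $e_\delta(\hat x)$, so $u_1$ remains bounded a priori on every forward interval of existence, closing the loop.
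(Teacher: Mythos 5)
Your proposal follows essentially the same route as the paper, whose own argument for Proposition \ref{pro3} is exactly the one-line cascade you expand: global exponential convergence of $\hat\bfthe$ under {\bf Assumption \ref{ass5}} via the LS+FF estimator, hence exponential decay of $\hat x - x = \calw\tilde\bfthe$ with $\calw$ bounded, combined with the almost global stability of the nominal SRF/ATAN-PLL from \cite{RAN,ZONetal}. One minor slip that does not affect the (cited, standard) conclusion: on the active branch of \eqref{phit1} one has $\dot{(F^{-1})}=\alpha\bfome^\top\bfome-\beta F^{-1}$, not $\alpha\bfome^\top\bfome-\beta\,\mathbb{I}_3$.
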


%
\section{Simulation Results}
\lab{sec5}
{To validate the theoretical results we consider a VSC with rated power of $1\; GW$ interfaced to a $320\;kV$ transmission grid operating at $\omega=50\; Hz$ and with a SCR of $1$, therefore referred as \textit{weak}.  The parameters of the transmission grid are given as follows:
$$
L_g=0.33\;H,\;R_g=10.24\;\Omega,\;V_g=261\;kV,
$$
whereas the parameters of the filter and of the VSC are
$$
C=5.29\;\mu F,\;L=0.065\;H,\;R=1.02\;\Omega.
$$
We further suppose that the grid is characterized by a relatively high inertia and thus we can safely assume that the frequency $\omega$ remains constant over the time-scale of interest.
To better contextualize the simulation results, let us further recall that---in accordance with standard operation of the VSC---the current references $i_{\tt dq}^\mathrm{ref}$  and phase reference $\phi^\mathrm{ref}$ are established by the power system operator via the power flow equations. The rationale behind this choice is to ensure that a desired active power demand $P^\mathrm{ref}$ and voltage amplitude $V^\mathrm{ref}:=\sqrt{3/2}V_g$ at the PCC are achieved at steady-state. We have indeed the following relations:
\begin{equation}\label{eq:h}
\begin{aligned}
\phi^\mathrm{ref}&=h_\phi(r_g,L_g,V_g,\omega,P^\mathrm{ref},V^\mathrm{ref}),\\
i_{\tt dq}^\mathrm{ref}&=h_{\tt dq}(r_g,L_g,V_g,\omega,P^\mathrm{ref},V^\mathrm{ref}),
\end{aligned}
\end{equation}
for some functions $h_\phi$, $h_{\tt dq}$. Based on these considerations, we identify  two main scenarios for simulations, evaluating the ability of the controlled VSC to maintain synchronization in the face of variations of the $\tt dq$ currents and phase references---stemming from the variation of the power demand---and of the grid parameters. The control objectives to be fulfilled are the following.
\begin{enumerate}
    \item \textit{Nominal conditions.} Preserve synchronization of the VSC to the angle $\vartheta^\mathrm{ref}(t)=\omega t+\phi^\mathrm{ref}$ and ensure regulation of the $\tt dq$ currents to $i_{\tt dq}^\mathrm{ref} $.  Note that the scheme must be effective assuming variation of the active power references over the entire admissible range---a specification that is known to be particularly challenging when the grid has a short-circuit-ratio inferior to $1.5$.
    \item \textit{Perturbed conditions.} Preserve synchronization of the VSC to the angle $\bar \vartheta(t)=\omega t+\bar\phi$, for some $\bar\phi\in\mathbb R$ and ensure regulation of the $\tt dq$ currents to $i_{\tt dq}^\mathrm{ref}$, following abrupt variations of any of the parameters describing the Thevenin equivalent, that is the grid frequency $\omega$, the grid voltage amplitude $V_g$, the grid inductance $L_g$ and the grid resistance $r_g$.
\end{enumerate}

In the simulations we consider the current controller \eqref{eq:current-controller}, together with the adaptive ATAN-PLL\footnote{Similar results were obtained for the adaptive SRF-PLL, but are not reported for the sake of brevity.} given by \eqref{adaobs}, \eqref{adapi},   \eqref{imp} and  \eqref{hatxi}, with $e_\delta(\hat x)=\atantwo(\hat x_2,\hat x_1)$ and where the estimated parameters $\hat \bfthe$ are generated via the LS+FF \eqref{intestt1}---see Fig. \ref{fig:scheme} for a block diagram representation of the proposed scheme. The gains of the ATAN-PLL are set to $K_{P\delta}=200$, $K_{I\delta}=10^3$; the gains of the current controller \eqref{eq:current-controller} are set to $K_P=250$, $K_I=50\cdot 10^3$; the gains of the LS+FF are set to $\alpha=\beta=10^3$, $M=100$ and $f_0=1$.  
\begin{figure*}
\centering
\includegraphics[width=0.6\textwidth]{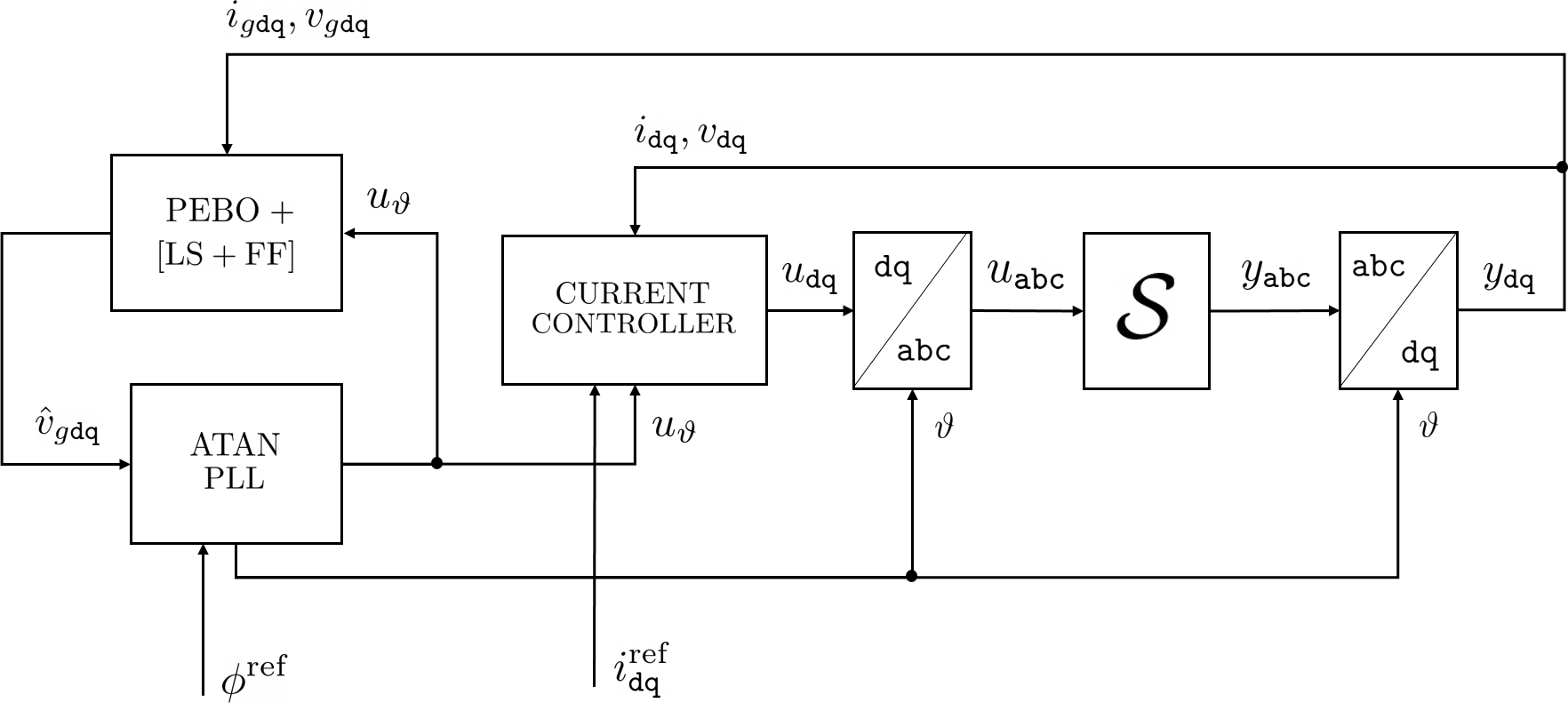}
    \caption{Block diagram representation of the proposed scheme where $\mathcal S$ denotes  \eqref{vsc_abc}-\eqref{vg}.}  \label{fig:scheme}
\end{figure*}

To evaluate the performance of the proposed solution in {\em nominal conditions}, we assume that active power reference varies by step in the time interval $[0\;2]\;s$. Since the system parameters are known, so are the functions $h_\phi$ and $h_{\tt dq}$ and then the corresponding current and phase references can be immediately calculated via \eqref{eq:h}. The $\tt dq$ currents and phase responses are illustrated in Fig.\ref{fig:nominal-idq-phi}. It can be seen that the corresponding references are correctly and fastly tracked over the entire admissible power range. This should be contrasted with the responses of the conventional {\em non-adaptive} ATAN-PLL, which for the same gains becomes unstable for large power demands, as seen in Fig. \ref{fig:nominal-power-detail} where a detail of the phase responses of the adaptive and non-adaptive PLL is given. \\
To evaluate the performance of the adaptive ATAN-PLL in {\em perturbed conditions}, we assume that the VSC is operating at nominal voltage and $75\%$ of its rated power and that it is affected by a disturbance at $1\;s$. First, we model the disturbance as a drop of $30\%$ in the grid voltage $V_g$. As a second scenario, we consider a frequency drop of $1\;Hz$. Finally, a last scenario focuses on the case where, as a result of a trip of the grid impedance, a change in the short-circuit-ratio from $1$ to $0.75$ occurs, corresponding to a simultaneous change of $25\%$ of both the grid inductance and resistance. It must be remarked that, differently from the voltage and frequency drop scenarios, a change in the short-circuit-ratio violates the assumption of Proposition~\ref{pro2}, requiring indeed \textit{a priori} knowledge of the grid impedance. Simulations with respect to this type of disturbance are therefore carried out to assess robustness of the proposed adaptive ATAN-PLL to poor knowledge of such parameters. The estimate of the voltage amplitude, the $\tt{dq}$ currents and the phase in the voltage drop scenario are illustrated in Fig. \ref{fig:voltage-drop}, whereas the estimate of the frequency, the $\tt{dq}$ currents and the phase in the frequency drop scenario are illustrated in Fig. \ref{fig:frequency-drop}.
\begin{figure*}[h]
\centering
\includegraphics[width=0.49\textwidth]{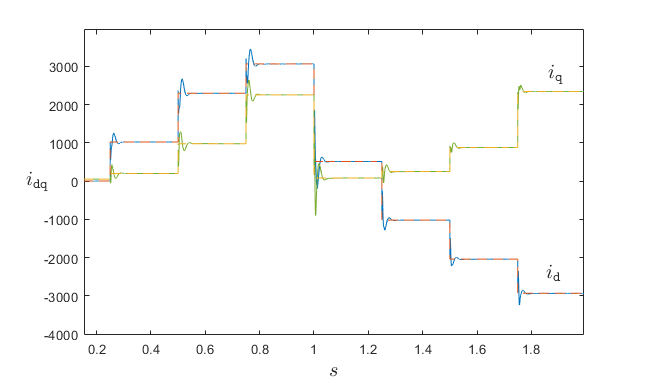}
  \includegraphics[width=0.49 \textwidth]{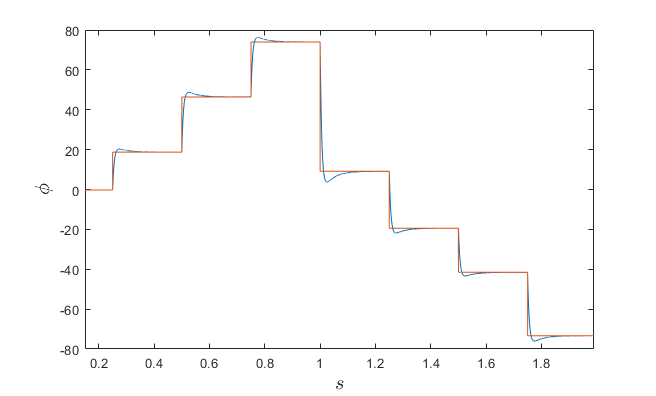}
    \caption{The $\tt dq$ currents and phase responses under the adaptive ATAN-PLL based on the LRE \eqref{lre}, following variation of the demand over the entire admissible range.}  \label{fig:nominal-idq-phi}
\end{figure*}
\begin{figure}[h]
\centering
\includegraphics[width=0.49\textwidth]{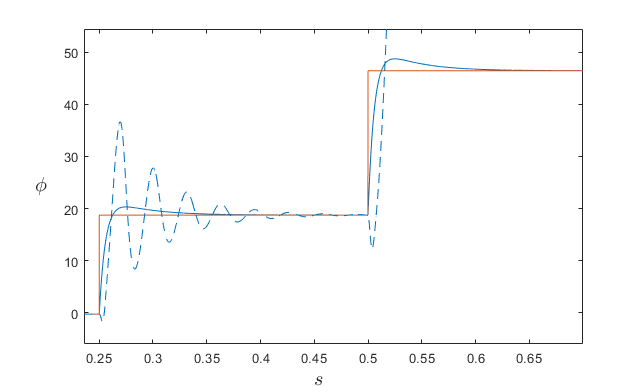}
    \caption{Detail of the phase responses under the adaptive ATAN-PLL based on the LRE \eqref{lre} (solid line) and conventional ATAN-PLL (dotted line), for power demands of $0.4\;GW$ ($\approx 19^\circ$), then $0.9\;GW$ ($\approx 45^\circ$).}  \label{fig:nominal-power-detail}
\end{figure}
\begin{figure*}[h]
\centering
\includegraphics[width=0.32\textwidth]{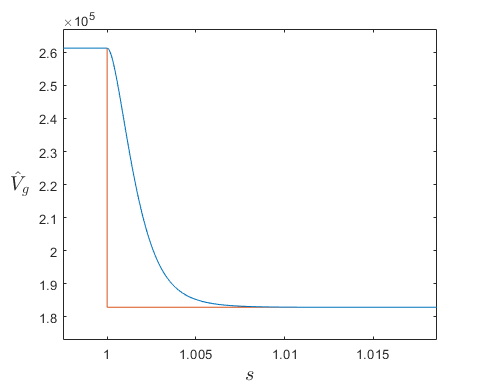}
\includegraphics[width=0.32\textwidth]{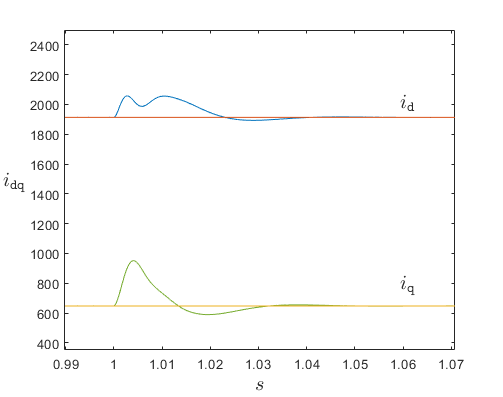}
\includegraphics[width=0.32\textwidth]{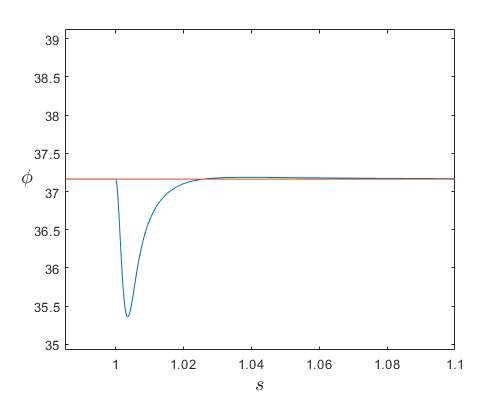}
    \caption{Responses of the voltage amplitude estimate, the $\tt dq$ currents and the phase under the adaptive ATAN-PLL based on the LRE \eqref{lre}, following a drop of $30\%$ of the grid voltage amplitude with respect to its nominal value.}  \label{fig:voltage-drop}
\end{figure*}
\begin{figure*}[h]
\centering
\includegraphics[width=0.32\textwidth]{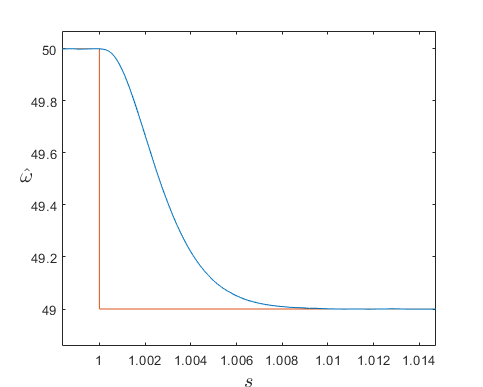}
\includegraphics[width=0.32\textwidth]{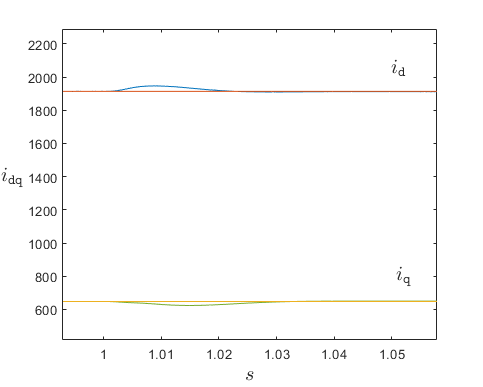}
\includegraphics[width=0.32\textwidth]{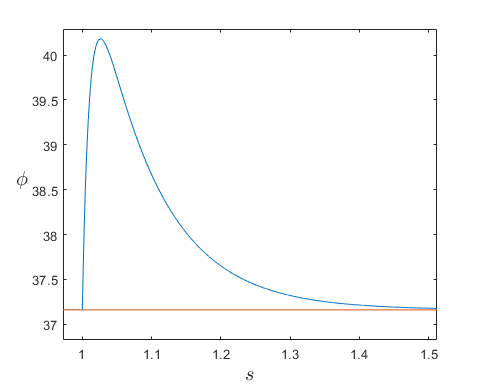}
    \caption{Responses of the frequency estimate, the $\tt dq$ currents and the phase under the adaptive ATAN-PLL based on the LRE \eqref{lre}, following a drop of $1\; Hz$ of the grid frequency.}  \label{fig:frequency-drop}
\end{figure*}
In both situations the estimation of the unknown parameters is achieved smoothly in less than $15\; ms$, further guaranteeing that the $\tt{dq}$ currents and phase are quickly restored to their desired values. In Fig. \ref{fig:SCR-drop} we illustrate the $\tt dq$ currents and phase responses following a trip of the grid impedance.
\begin{figure*}[h!]
\centering
\includegraphics[width=0.49\textwidth]{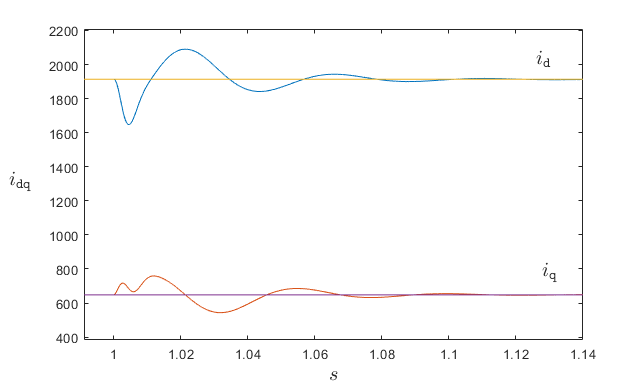}
\includegraphics[width=0.49\textwidth]{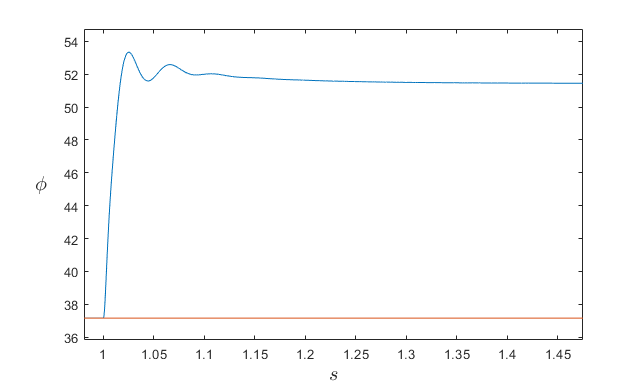}
    \caption{Responses of the $\tt dq$ currents and the phase under the adaptive ATAN-PLL based on the LRE \eqref{lre}, following a trip of the grid impedance reducing the SCR from $1$ to $0.75$.}   \label{fig:SCR-drop}
\end{figure*}
It can be seen that immediately after this event the $\tt dq$ currents are rapidly restored and that the phase remains stable, demonstrating robustness of the proposed scheme face to variation of the grid inductance and resistance.\\
Before closing this section we make the remark that it is actually possible to relax the excitation {\bf Assumption 5}. Indeed, it has been shown in \cite{PYRetal} that global exponential convergence of an LS+FF estimator with dynamic regression extension  is ensured with the {\em strictly weaker} interval excitation condition \cite{KRERIE}.  This scheme was also implemented in the simulations but the performance improvement with respect to the LS+FF was negligible---hence, for brevity, the simulated results are omitted.


\section{Concluding Remarks}
\lab{sec6}
%
We have presented in the paper an adaptive solution to the problem of almost global synchronization of a grid-connected VSC. The scheme consists of an adaptive observer of the grid voltage and a standard PLL. The standing assumptions adopted for the problem are quite standard, with the only critical one being the knowledge of the grid parameters. 
It is clear from the proof of Proposition \ref{pro1} that the key step of deriving the LRE \eqref{lre} is not affected if we add the dynamics of the DC voltage of the converter. It is the authors belief that, in the present context, this additional complication is unnecessary. 

Our current research is oriented towards the {\em experimental implementation} of the proposed synchronization scheme and its comparison with the standard non-adaptive PLL-based strategies. We hope to be able to report these results in the near future.

We are also working on the development of a new adaptive observer design for the case when the grid parameters $r_g$ and $L_g$ are unknown. The prize that is payed with this modification is that the number of the parameters to be estimated will be increased jeopardizing the verification of the excitation requirement of  {\bf Assumption \ref{ass5}}.

%
\appendix
%
\section{Assignable Equilibria}
\lab{appa}
%

To streamline the derivation of the assignable equilibria, we define the following matrices: 
$$
    Z_g:=L_g\omega J-r_g\mathbb{I}_2,\quad Z:=L\omega J-r\mathbb{I}_2,\quad
    Y_\mathrm{c}:=C\omega J,
$$
and the block matrix
$$
Q:=\begmat{ Z_g & -\mathbb{I}_2 \\ \mathbb{I}_2 & -Y_\mathrm{c}}
$$
where  $Z_g \in \rea^{2 \times 2}$, $Z \in \rea^{2 \times 2}$ and $Y_\mathrm{c} \in \rea^{2 \times 2}$ are the grid and converter impedances and the filter admittance, respectively.

The lemma below, whose proof follows trivially setting the left-hand side of \eqref{sys} to zero and recalling \eqref{eq:vgdq}, characterizes the assignable equilibrium set and the corresponding equilibrium control.
 {
\begin{lemma}\em
\lab{lem1}
Consider the system \eqref{sys}, \eqref{sysx}.
\begite
\item[(i)]
    The set of assignable equilibria is given by 
    \begin{equation}
   \mathcal{E}:=\{( {x},y)\in\mathbb R^8\;|\;x=V_ge^{J\delta}\mathbf{e}_1,   y_{1,4}=W(x,y_{5,6}),\delta\in\mathbb R\} 
    \end{equation}
        where the mapping $W:\rea^2 \times \rea^2 \to \rea^4$ is defined as
        $$
                    W(x,y_{5,6}):=Q^{-1}\begmat{x \\  \yc},
        $$
from which we conclude that the desired values for $\delta$ and $y_{5,6}$ can be freely chosen.
\item[(ii)]     
Given an assignable equilibrium $(\bar x,\bar y)\in\mathcal E$, with \mbox{$\bar x=V_ge^{J\bar\phi}$}, $\bar\phi\in\mathbb R$, the corresponding equilibrium control is
        $$
        \bar u=\begmat{\omega\\ \bar y_{3,4}-Z \bar y_{5,6}}.
        $$
\endite
\end{lemma}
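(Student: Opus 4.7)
The plan is to equate every time derivative in \eqref{sys}--\eqref{sysx} to zero and solve the resulting algebraic system for the state vector and for the control, treating $\delta$ and $y_{5,6}$ as the free design parameters. First I would handle the state $x$: the relation $\dot x=-\omega Jx+u_1Jx=(u_1-\omega)Jx$ and the fact that $V_g>0$ (hence $x\neq 0$) force $\bar u_1=\omega$. Substituting this back into \eqref{eq:dot-delta} gives $\dot\delta=0$, so $\delta$ is constant at any equilibrium and, via \eqref{eq:vgdq}, one obtains $\bar x=V_g e^{J\bar\phi}\mathbf{e}_1$ for the corresponding constant $\bar\phi$. This establishes the first component of $\bar u$ and the parametrization of $\bar x$ by $\bar\phi\in\mathbb R$.

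Next I would process the third block row of \eqref{sys}, which at equilibrium reads $-\bar y_{3,4}-r\bar y_{5,6}+L\omega J\bar y_{5,6}+\bar u_{2,3}=0$. Solving for $\bar u_{2,3}$ and recognizing the combination $L\omega J-r\mathbb{I}_2=Z$ yields the formula $\bar u_{2,3}=\bar y_{3,4}-Z\bar y_{5,6}$ claimed in part (ii). This already shows that, once $\bar y_{3,4}$ and $\bar y_{5,6}$ are determined, the full equilibrium input $\bar u$ is pinned down.

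Then I would use the remaining two block rows (those associated with $i_{g\tt dq}$ and $v_{\tt dq}$) to produce the algebraic map $W$. Setting the first block to zero and using $\bar u_1=\omega$ gives $(L_g\omega J-r_g\mathbb{I}_2)\bar y_{1,2}-\bar y_{3,4}=\bar x$, i.e.\ $Z_g\bar y_{1,2}-\bar y_{3,4}=\bar x$ up to the sign convention absorbed in $x$; setting the second block to zero gives $-\bar y_{1,2}+C\omega J\bar y_{3,4}+\bar y_{5,6}=0$, i.e.\ $\bar y_{1,2}-Y_c\bar y_{3,4}=\bar y_{5,6}$. Stacking these two identities is exactly $Q\,\bar y_{1,4}=\col(\bar x,\bar y_{5,6})$, so as long as $Q$ is invertible, $\bar y_{1,4}=W(\bar x,\bar y_{5,6})=Q^{-1}\col(\bar x,\bar y_{5,6})$, which is the claim of part (i). The arbitrariness of $\bar\phi$ and $\bar y_{5,6}$ is then immediate.

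The only non-routine step is the invertibility of $Q$; I would dispatch it by a brief determinant computation, using the facts $J^2=-\mathbb{I}_2$, $Z_g J=JZ_g$, and $Y_c=C\omega J$, to show $\det Q=\det(Z_g-Y_c^{-1})\det Y_c$ (a Schur complement), which is nonzero for any $r_g>0$, $L_g>0$, $C>0$, $\omega>0$. Since the remaining algebra is just linear inversion and back-substitution, I expect no further obstacle beyond keeping the sign conventions between $x$, $v_{g\tt dq}/L_g$, and the entries of $Q$ consistent throughout the calculation.
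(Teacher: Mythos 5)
Your proposal is correct and takes the same (and only) route the paper does: the paper's entire proof is the remark that the lemma ``follows trivially setting the left-hand side of \eqref{sys} to zero and recalling \eqref{eq:vgdq}'', which is exactly the computation you carry out, with the welcome extra of explicitly checking that $Q$ is invertible. The sign/scaling mismatch you wave away in the first block row (direct computation gives $Z_g\bar y_{1,2}+\bar y_{3,4}=L_g\bar x$, whereas $Q$ as defined carries a $-\mathbb{I}_2$ in that slot and the lemma drops the $1/L_g$ normalization of $x$) traces to an inconsistency in the paper's own conventions rather than to a gap in your argument.
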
}


\begin{thebibliography}{00}

\bibitem{ABR}
D. Abramovitch, Phase-locked loops: a control centric tutorial, {\em Proceedings of the 2002 American Control Conference} (IEEE Cat. No.CH37301), 2002, pp. 1-15 vol.1, doi: 10.1109/ACC.2002.1024769.

\bibitem{ARAetaltac17}
S. Aranovskiy, A. Bobtsov, R. Ortega and A. Pyrkin, Performance enhancement of parameter estimators via dynamic regressor extension and mixing,  \TAC, vol. 62, pp. 3546-3550, 2017.

\bibitem{BOS}
  B.K. Bose, Global Warming: Energy, Environmental Pollution, and the Impact of Power Electronics, {\em IEEE Industrial Electronics Magazine}, 2010, pp. 6-17 vol.4, doi:10.1109/MIE.2010.935860.

\bibitem{CHU}
S.K. Chung, Phase-locked loop for grid-connected three-phase power conversion systems, {\em IEE Proceedings-Electric Power Applications}, vol. 147, pp. 213-219, 2000.
  
\bibitem{DAVetal}
M. Davari and Y. A. I. Mohamed, Robust Vector Control of a Very Weak-Grid-Connected Voltage-Source Converter Considering the Phase-Locked Loop Dynamics, {\em IEEE Transactions on Power Electronics}, vol. 32, no. 2, pp. 977-994, Feb. 2017, doi: 10.1109/TPEL.2016.2546341.

\bibitem{EGEetal}
A. Egea-Alvarez, S. Fekriasl, F. Hassan and O. Gomis-Bellmunt, Advanced Vector Control for Voltage Source Converters Connected to Weak Grids, {\em IEEE Transactions on Power Systems}, vol. 30, no. 6, pp. 3072-3081, Nov. 2015, doi: 10.1109/TPWRS.2014.2384596.

\bibitem{KRERIE}
G. Kreisselmeier and G. Rietze-Augst, Richness and excitation on an interval---with application to continuous-time adaptive control, \TAC, vol. 35, no. 2, pp. 165-171, 1990.

\bibitem{LJUbook} 
L. Ljung, {\em System Identification: Theory for the User}, Prentice Hall, New Jersey, 1987.

\bibitem{MANetal}
M. Z. Mansour, S. P. Me, S. Hadavi, B. Badrazadeh, A. Karimi and B. Bahrani, Nonlinear Transient Stability Analysis of Phase-Locked Loop Based Grid-Following Voltage Source Converters Using Lyapunov's Direct Method, in {\em IEEE Journal of Emerging and Selected Topics in Power Electronics}, doi: 10.1109/JESTPE.2021.3057639.

\bibitem{MILetal2018}
F. Milano, F. Dorfler, G. Hug, D.J. Hill and G. Verbic, Foundations and Challenges of Low-Inertia Systems (Invited Paper), {\em Power Systems Computation Conference} (PSCC),  2018, pp.1-25, doi=10.23919/PSCC.2018.8450880.

\bibitem{ORTetalscl15}
 R. Ortega, A. Bobtsov, A. Pyrkin and A. Aranovskiy, A parameter estimation approach to  state observation of nonlinear systems, {{\it Systems and Control Letters}},  vol. 85, pp 84-94, 2015.

\bibitem{ORTetaltac21}
R. Ortega, S. Aranovskiy, A. Pyrkin, A Astolfi and A. Bobtsov, New results on parameter estimation via dynamic regressor extension and mixing: Continuous and discrete-time cases, \TAC, vol. 66, no. 5, pp. 2265-2272, 2021.

\bibitem{ORTetalaut21}
R. Ortega, A. Bobtsov, N. Nikolayev, J. Schiffer, D. Dochain, Generalized parameter estimation-based observers: Application to power systems and chemical-biological reactors,  \AUT,  vol. 129, 109635, 2021.

\bibitem{PAPetal}
L. Papangelis, M. Debry, T. Prevost, P. Panciatici and T. Van Cutsem, Stability of a Voltage Source Converter Subject to Decrease of Short-Circuit Capacity: A Case Study, {\em Power Systems Computation Conference} (PSCC), 2018, pp. 1-7, doi: 10.23919/PSCC.2018.8442773.

\bibitem{PYRetal}
A. Pyrkin, A. Bobtsov, R. Ortega and A. Isidori, An adaptive observer for uncertain linear time-varying systems with unknown additive perturbations, \AUT, (submitted), 2021. {\tt (arXiv:2112.05497)}

\bibitem{RAN}
A. Rantzer, Almost global stability of phase-locked loops. {\em Proceedings of the 40th IEEE Conference on Decision and Control}, Vol. 1, IEEE, 2011.

\bibitem{RUGbook}
W.J. Rugh, {\em Linear Systems Theory}, 2nd Edition, Prentice hall, NJ, 1996.

\bibitem{SASBODbook}
S. Sastry and M. Bodson, {\em Adaptive Control: Stability, Convergence and Robustness}, Prentice-Hall, New Jersey, 1989.
 
\bibitem{SCHetal}
J. Schiffer, D. Zonetti, R. Ortega, A. Stankovic, T. Sezi and J. Raisch, A survey on modeling of microgrids: From fundamental physics to phasors and voltage sources, {\em Automatica}, vol. 74, pp. 135-150, 2016.

\bibitem{SUULetal}
J. A. Suul, S. D'Arco, P. Rodriguez, and M. Molinas,  Impedance-compensated grid synchronisation for extending the stability range of weak grids with voltage source converters, {\em IET Generation, Transmission \& Distribution}, vol. 10, no. 6, pp. 1315-1326, 2016.

\bibitem{TAObook}
G. Tao, {\em Adaptive Control Design and Analysis}, John Wiley \& Sons, New Jersey, 2003.

\bibitem{TEObook}
R. Teodorescu, M. Liserre and P. Rodriguez, {\em Grid Converters for Photovoltaic and Wind Power Systems}, John Wiley \& Sons, 2011.

\bibitem{WANGetal}
X. Wang, M. G. Taul, H. Wu, Y. Liao, F. Blaabjerg and L. Harnefors, Grid-Synchronization Stability of Converter-Based Resources---An Overview, {\em IEEE Open Journal of Industry Applications}, vol. 1, pp. 115-134, 2020, doi: 10.1109/OJIA.2020.3020392.

\bibitem{YAZIRAbook}
A. Yazdani and R. Iravani, {\em Voltage-Sourced Converters in Power Systems}, Wiley-IEEE Press, 2010.

\bibitem{ZHAetal}
Z. Zhang, R. Schuerhuber, L. Fickert, F. Katrin, C. Guochu and Z. Yongming, Domain of Attraction's Estimation for Grid Connected Converters with Phase-Locked Loop, {\em IEEE Transactions on Power Systems}, doi: 10.1109/TPWRS.2021.3098960.

\bibitem{ZHAOetal}
J. Zhao, M. Huang, H. Yan, C. K. Tse and X. Zha, Nonlinear and Transient Stability Analysis of Phase-Locked Loops in Grid-Connected Converters, {\em IEEE Transactions on Power Electronics}, vol. 36, no. 1, pp. 1018-1029, Jan. 2021, doi: 10.1109/TPEL.2020.3000516.

\bibitem{ZHOU}
J. Z. Zhou, H. Ding, S. Fan, Y. Zhang and A. M. Gole,  Impact of Short-Circuit Ratio and Phase-Locked-Loop Parameters on the Small-Signal Behavior of a VSC-HVDC Converter, {\em IEEE Transactions on Power Delivery}, vol. 29, no. 5, pp. 2287-2296, Oct. 2014, doi: 10.1109/TPWRD.2014.2330518.

\bibitem{ZONetal}
D. Zonetti, O. Gomis-Bellmunt, E. Prieto-Araujo, Marc Cheah-Mane, Passivity-based design and analysis of phase-locked loops, {\em European Control Conference}, (submitted), 2022. {\tt (arXiv:2201.04862)}

\end{thebibliography}
\end{document}